\title{Chromatic and flow polynomials of generalized vertex join graphs and outerplanar graphs}
\titlerunning{Graph polynomials of generalized vertex joins}
\author{Boris Brimkov and Illya V. Hicks}
\authorrunning{B. Brimkov and I.V. Hicks}
\institute{
Computational \& Applied Mathematics, Rice University, Houston, TX 77005, USA\\
\email{boris.brimkov@rice.edu, ivhicks@rice.edu}
}
\begin{document}
\maketitle

\begin{abstract}
A generalized vertex join of a graph is obtained by joining an arbitrary multiset of its vertices to a new vertex. We present a low-order polynomial time algorithm for finding the chromatic polynomials of generalized vertex joins of trees, 
and by duality we find the flow polynomials of arbitrary outerplanar graphs. 

We also present closed formulas for the chromatic and flow polynomials of vertex joins of cliques and cycles, otherwise known as ``generalized wheel" graphs.

\medskip

{\bf Keywords:} chromatic polynomial, flow polynomial, vertex join, outerplanar graph, wheel graph
\end{abstract}

\section{Introduction}

Graph polynomials contain various information about graphs and are used to analyze the properties of graphs and networks. Two of the most important single-variable graph polynomials are the chromatic polynomial and the flow polynomial. Studying these polynomials is an active area of research with many theoretical consequences and practical applications. 

The chromatic polynomial is related to the Potts model, which models the behavior of certain elements of solid state physics, like crystals and ferromagnets. The coefficients of the chromatic polynomial have another application in statistical mechanics -- they are used to model state changes, as described in the work of Sokal \cite{application}. The chromatic polynomial is also related to Stirling numbers, and thus finds applications in a variety of analytic and combinatorics problems; see \cite{stirling2,stirling1}.

There are important theoretical questions surrounding the flow polynomial as well, such as the 5-flow conjecture and other flow existence theorems and conjectures. The flow polynomial also has an application in crystallography and statistical mechanics as it is related to models of the physical properties of ice and crystal lattices \cite{ice1,ice2}. For more applications of chromatic and flow polynomials, see the comprehensive survey of Ellis-Monaghan and Merino \cite{tuttemain} and the bibliography therein.

Unfortunately, computing the chromatic and flow polynomials of a graph are very challenging tasks. These problems are NP-hard for general graphs, and even for bipartite planar graphs as shown in \cite{nphard}, and sparse graphs with $|E|=\mathcal{O}(|V|)$. In fact, most of the terms of the chromatic and flow polynomials of general graphs cannot even be approximated (see \cite{inapprox,nphard}).

Thus, a large volume of work in this area is focused on exploiting the structure of specific types of graphs in order to derive closed formulas, algorithms, or heuristics for computing their chromatic and flow polynomials. In particular, classes of graphs which are generalizations of trees, cliques, and cycles are frequently investigated. 

For example, Wakelin et al. \cite{polytree2,polytree1} consider a class of graphs called polygon trees and find their chromatic polynomials; they also characterize the chromatic polynomials of biconnected outerplanar graphs and the flow polynomials of their dual graphs.  Whitehead \cite{surveychrom,twotrees} characterizes the chromatic polynomials of a class of clique-like graphs called $q$-trees. 
Furthermore, Lazuka \cite{cactus} obtains explicit formulas for the chromatic polynomials of cactus graphs, Gordon \cite{tuttetrees} studies Tutte polynomials (a generalization of chromatic and flow polynomials) of rooted trees, and Mphako-Banda \cite{flower,tutteflower} derives formulas for the chromatic, flow, and Tutte polynomials of flower graphs.

In this paper, we consider yet another generalization of trees, cliques, and cycles. We define a generalized vertex join of a graph $G$ to be the graph obtained by joining an arbitrary multiset of the vertices of $G$ to a new vertex. We find the chromatic polynomials of generalized vertex joins of trees, cliques, and cycles, and use the duality of chromatic and flow polynomials to find the flow polynomials of certain other classes of graphs, including outerplanar graphs. Thus, we complement the work of Wakelin et al. \cite{polytree2,polytree1} on chromatic polynomials of outerplanar graphs and flow polynomials of their duals, by characterizing the flow polynomials of outerplanar graphs and the chromatic polynomials of their duals. Several related results are included as well.

The paper is organized as follows. In the next section, we recall some notions and notations related to graph theory and graph polynomials. In Section~3, we list well-known technical tools used in the computation of chromatic and flow polynomials. In Section~4, we calculate the chromatic polynomials of generalized vertex joins of trees; we relate these results to outerplanar graphs in Section~5. In Section~6, we consider generalized vertex joins of cliques and cycles, and related dual results. We conclude with some final remarks and open questions in Section~7.

\section{Preliminaries}

We assume the reader is familiar with basic graph theoretic notions and operations. For an extensive background on graph theory, one is referred to \cite{bondy}. In this section, we first recall the definition of a multiset and related terms, 
followed by select graph theoretic notions used in the paper.

Let $G=(V,E)$ be a graph. A \emph{multiset} $S$ over $V$ is a collection of vertices of $V$, each of which may appear more than once in $S$. The number of times a vertex $v$ appears in $S$ is the \emph{multiplicity} of $v$. The \emph{underlying set} of $S$ is the set $S'$ which contains the (unique) elements of $S$. For example, if $V=\{v_1,v_2,v_3,v_4,v_5\}$, a multiset $S$ over $V$ could be $\{v_1, v_1, v_3,v_4, v_4, v_4\}$ and the underlying set of $S$ is $S'=\{v_1,v_3,v_4\}$. Using this notion, we define the \emph{generalized vertex join of $G$ using $S$} to be the graph $G_S=(V\cup \{v^*\}, E\cup \{vv^*:v\in S\})$. Note that if the multiplicity of $v$ in $S$ is $m$, there are $m$ parallel edges between $v$ and $v^*$ in $G_S$. See Figure 1 for an example.

\begin{figure}
\begin{center}
\includegraphics[scale=0.4]{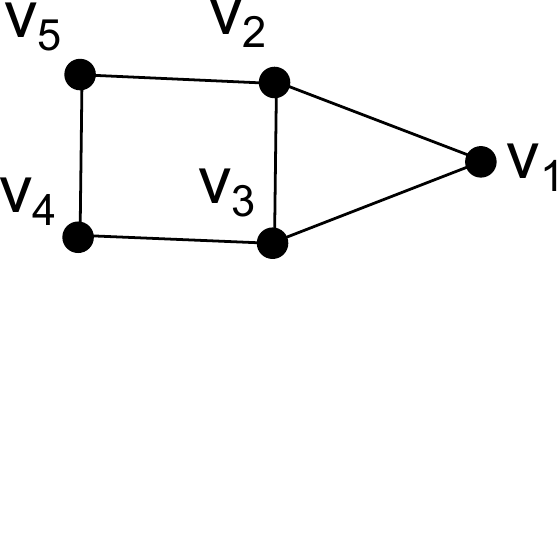}\qquad \qquad
\includegraphics[scale=0.4]{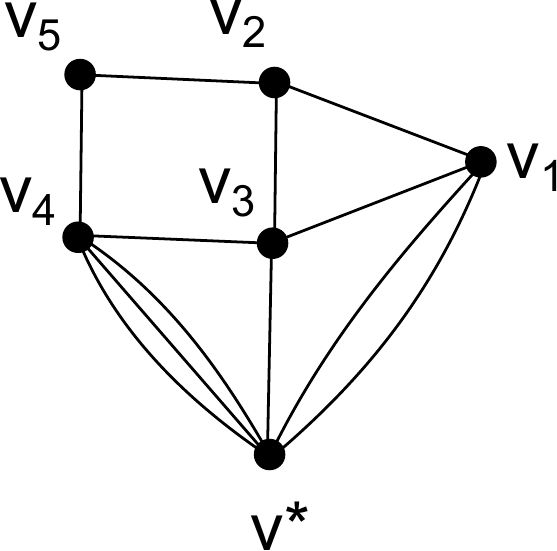}\qquad \qquad
\caption{\emph{Left:} A graph $G$. \emph{Right:} The generalized vertex join of $G$ using $S=\{v_1, v_1, v_3,v_4, v_4, v_4\}$.}
\end{center}
\end{figure}

Given $G=(V,E)$ and $S\subset V$, the \emph{induced subgraph} $G[S]$ is the subgraph of $G$ whose vertex set is $S$ and whose edge set consists of all edges of $G$ which have both ends in $S$. Given $u,v\in V$, the \emph{contraction} $G/uv$ is obtained by identifying $u$ and $v$ and removing any edges between them. Note that $G$ does not need to have the edge $uv$ for $G/uv$ to be defined. Finally, we say that $G$ is \emph{biconnected} if $G-v$ has exactly one connected component for all $v\in V$.




Many of the graphs considered in this paper are \emph{planar} graphs --- i.e., they can be drawn in the plane so that their edges do not cross each other. A graph drawn in such a way is called a \emph{plane} graph. If $G$ is a plane graph, its \emph{dual} $G^*$ is a graph that has a vertex corresponding to each face of $G$, and an edge joining the vertices corresponding to neighboring faces for each edge of $G$.
Note that if $G$ is connected, $G=(G^*)^*$. The \emph{weak dual} of $G$ is the subgraph of $G^*$ whose vertices correspond to the bounded faces of $G$.

We close this section by introducing the two graph polynomials we will investigate in the paper. A vertex coloring of $G$ is an assignment of colors to the vertices of $G$ so that no edge is incident to vertices of the same color. A $t$-coloring of $G$ is a vertex coloring using at most $t$ colors. The \emph{chromatic polynomial} $P(G;t)$ counts the number of $t$-colorings of $G$; if the dependence on $t$ is implied in the context, this can be abbreviated as $P(G)$. 

A closely related polynomial is the flow polynomial. A nowhere-zero $t$-flow of $G$ is an assignment of values $1,2,\ldots,t-1$ to the edges of an arbitrary orientation of $G$ so that the total flow entering each vertex is congruent modulo $t$ to the total flow leaving each vertex. The \emph{flow polynomial} $F(G;t)$ counts the number of nowhere-zero $t$-flows in $G$; if the dependence on $t$ is implied, this can be abbreviated as $F(G)$. 

\section{Tools for computing chromatic and flow polynomials}

Before we present our main results, we will list a number of well-known facts frequently used in the computation of chromatic and flow polynomials. Proofs of these and other results are given by Tutte \cite{tuttebook}. In what follows, let $G=(V,E)$ be a graph with $k$ components.

\begin{equation}
\begin{split}
P(G) = P(G-e) - P(G/e) \text{ for any }e=uv\text{, where }u,v\in V\\
P(G) = P(G+e) + P(G/e) \text{ for any }e=uv\text{, where }u,v\in V
\end{split}
\end{equation} 

\begin{equation}
\text{If }G=G_1\cup G_2\text{ and }G_1\cap G_2=K_r\text{, then }P(G)=\frac{P(G_1)P(G_2)}{P(K_r)}
\end{equation}

\begin{equation}
\text{If }e\in E(G)\text{ is a multiple edge, then }P(G) = P(G-e)
\end{equation}


\noindent Many of these identities can also be used in the computation of flow polynomials, as the flow polynomial can be obtained from the chromatic polynomial in the following way: 

\begin{equation}
\text{If $G$ is planar, then } F(G)=\frac{P(G^*)}{t^k}.
\end{equation}

\noindent Furthermore, if $k>1$, the following identities allow us to find the flow and chromatic polynomials of each component separately.
\begin{equation}
\begin{split}
\text{If }G = G_1 \cup G_2\text{ and }G_1 \cap G_2=\emptyset\text{, then }P(G) = P(G_1)P(G_2),\\
\text{ and }F(G) = F(G_1)F(G_2).
\end{split}
\end{equation}

\noindent Finally, closed formulas for the chromatic polynomials of some specific graphs are known. In particular, if $K_n$ is the complete graph on $n$ vertices and $C_n$ is the cycle on $n$ vertices, 

\begin{eqnarray}
P(K_n) &=& t(t-1)\ldots(t-(n-1))\\
P(C_n) &=& (t-1)^n+(-1)^n(t-1)\\
P(T) &=& t(t-1)^{n-1} \; \text{ for any tree $T$ on $n$ vertices}
\end{eqnarray}

\begin{remark} 
Let $G=(V,E)$ be any graph, $S$ be a multiset over $V$, and $S'$ be the underlying set of $S$. By (3), $P(G_S)=P(G_{S'})$. Thus, when computing the chromatic polynomial of a generalized vertex join of graph $G_S$, we can assume without loss of generality that the multiplicity of every element in $S$ is 1. The reason we consider multisets instead of sets of vertices is because allowing certain muliple edges in a class of graphs corresponds to a larger class of dual graphs. In turn, this can lead to broader results about flow polynomials, allowing us to easily establish the flow-equivalence of certain graphs.

For instance, in the next section, we compute the chromatic polynomials of generalized vertex joins of trees. We show in Section 5 that the duals of these graphs are outerplanar graphs, where the added vertex $v^*$ is the one corresponding to the outer face. Allowing multiple edges between $v^*$ and each vertex of the tree means the family of duals includes all biconnected outerplanar graphs, instead of ones for which at most one edge from each bounded face borders the outer face. Thus, we are able to state a broader result about flow polynomials. A similar principle is used in Section 6 with ``generalized wheel" graphs.
\end{remark}

We are now equipped with the motivation and technical tools necessary to derive our main results. 

\section{Chromatic polynomials of generalized vertex join trees}

Let $T=(V,E)$ be a tree with $|V|=n$, $S$ be a multiset over $V$, and let $T_S$ be the generalized vertex join of $T$ using $S$. 
For short, we will call $T_S$ a \emph{generalized vertex join tree}. In this section, we will present an efficient algorithm to find $P(T_S)$.



First, by Remark 1, we can assume that the multiplicity of every element in $S$ is 1.
Two special cases of $T_S$ occur when $S=\emptyset$ and when $|S|=1$. In the first case, $T_S$ consists of a tree on $n$ vertices and an isolated vertex. Thus, by (8) and (5), we find that $P(T_S)=t^2(t-1)^{n-1}$. In the second case, $T_S$ is a tree on $n+1$ vertices, so (8) can be applied to find that $P(T_S)=t(t-1)^n$. Thus, from now on, we will assume that $|S|\geq 2$.

Next, suppose there are $b$ bridges in $T_S$, and let $B$ be the set of vertices in $T_S$ which are an endpoint of some bridge, but do not belong to a cycle. Note that since $|S|\geq 2$, there is at least one cycle, so not all edges of $T_S$ are bridges.
Let $T_S'=T_S-B$. Using (2) and (8), it is easy to see that 
\begin{equation}
P(T_S)=P(T_S')\frac{\bigl(t(t-1)\bigr)^b}{t^b}=P(T_S')(t-1)^b.
\end{equation}

We now select some vertex $r\neq v^*$ in $T_S'$. Define $L:V(T_S')\backslash \{v^*\} \rightarrow \mathbb{N}\cup \{0\}$ by $L(x)=$ (length of the shortest path between $v$ and $r$ in $T_S'-v^*$). Also, define $\mathcal{L}=\max\{L(x):x\in V(T_S')\backslash\{v^*\}\}$, and $L_i(T)=\{v:L(v)=i\}$. If $L(v)=i$, $w$ is a \emph{child} of $v$ if $w$ is adjacent to $v$ and $L(w)=i+1$; $u$ is a \emph{parent} of $v$ if $u$ is adjacent to $v$ and $L(u)=i-1$. Vertex $z$ is a \emph{descendant} of $v$ if $z=v^*$ or if there is a path $v,p_1,\ldots, p_r,z$ such that $L(v)>L(p_1)>\ldots>L(p_r)>L(z)$. The set of all descendants of $v$ is denoted $D(v)$\footnote{These definitions are analogous to standard notions in graph theory --- like level of a node in a rooted tree, etc. --- and are slightly modified to suit our purposes.}. See Figure 2 for an illustration.

\begin{figure}
\begin{center}
\includegraphics[scale=0.3]{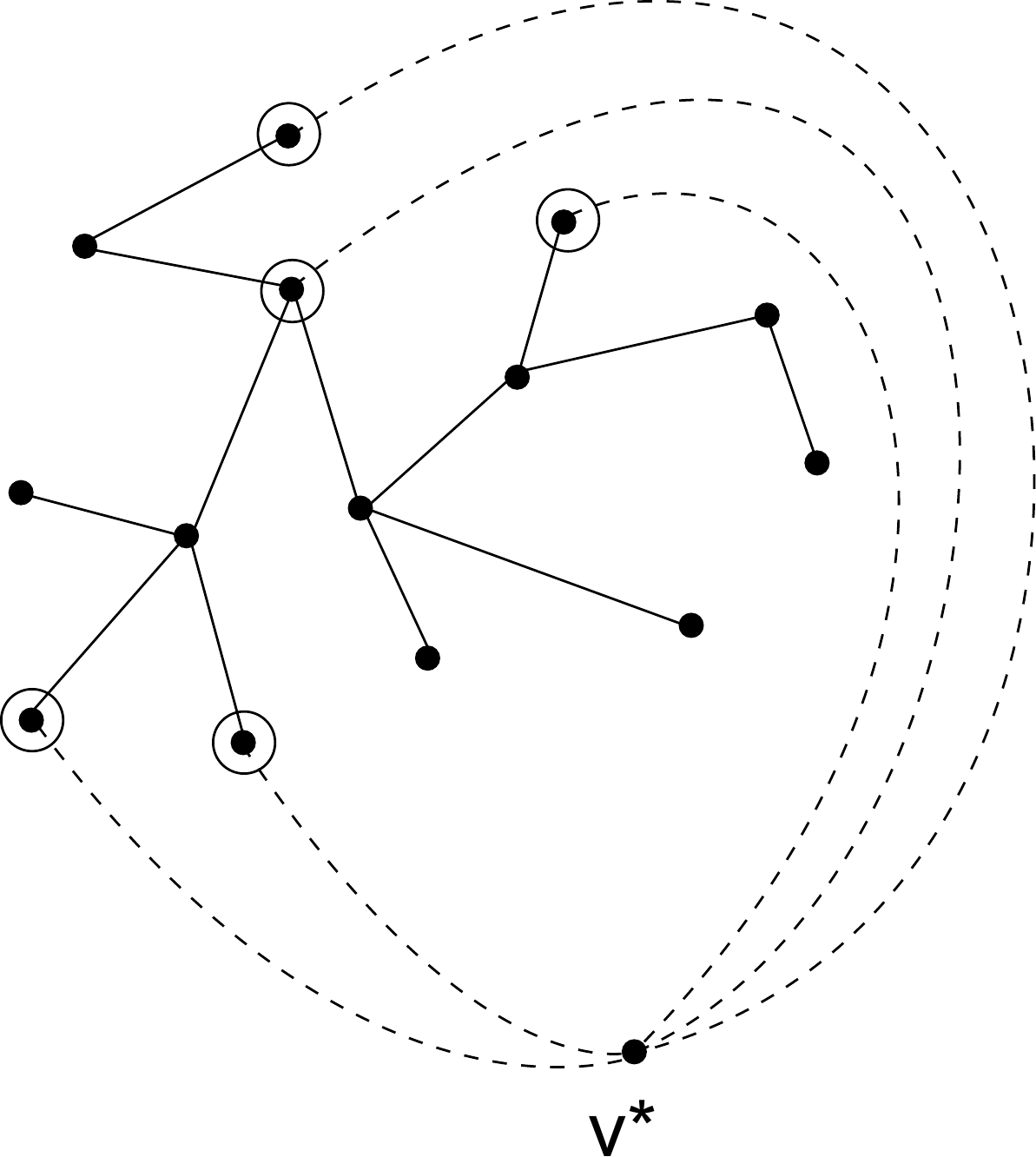}\qquad
\includegraphics[scale=0.3]{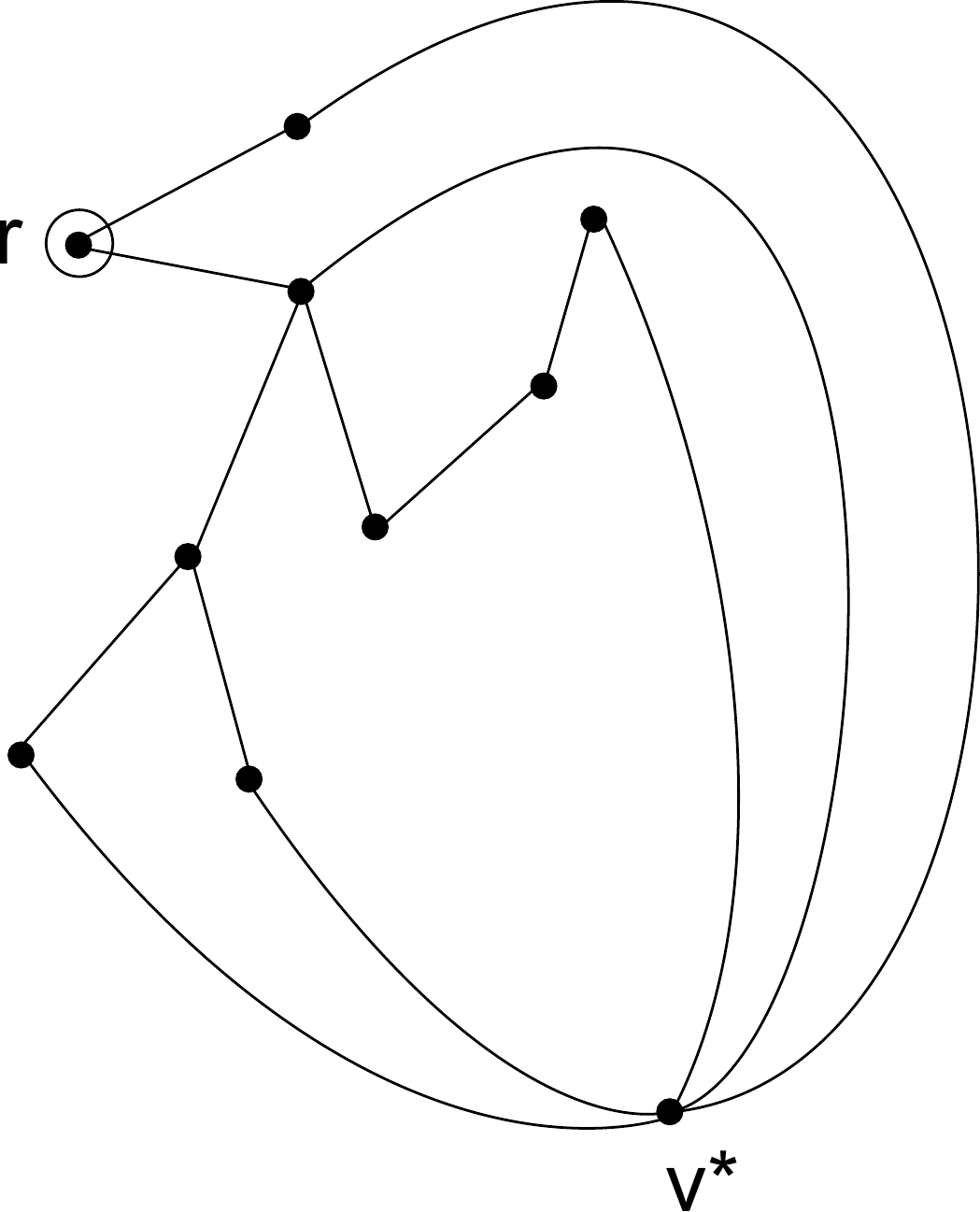}\qquad
\includegraphics[scale=0.3]{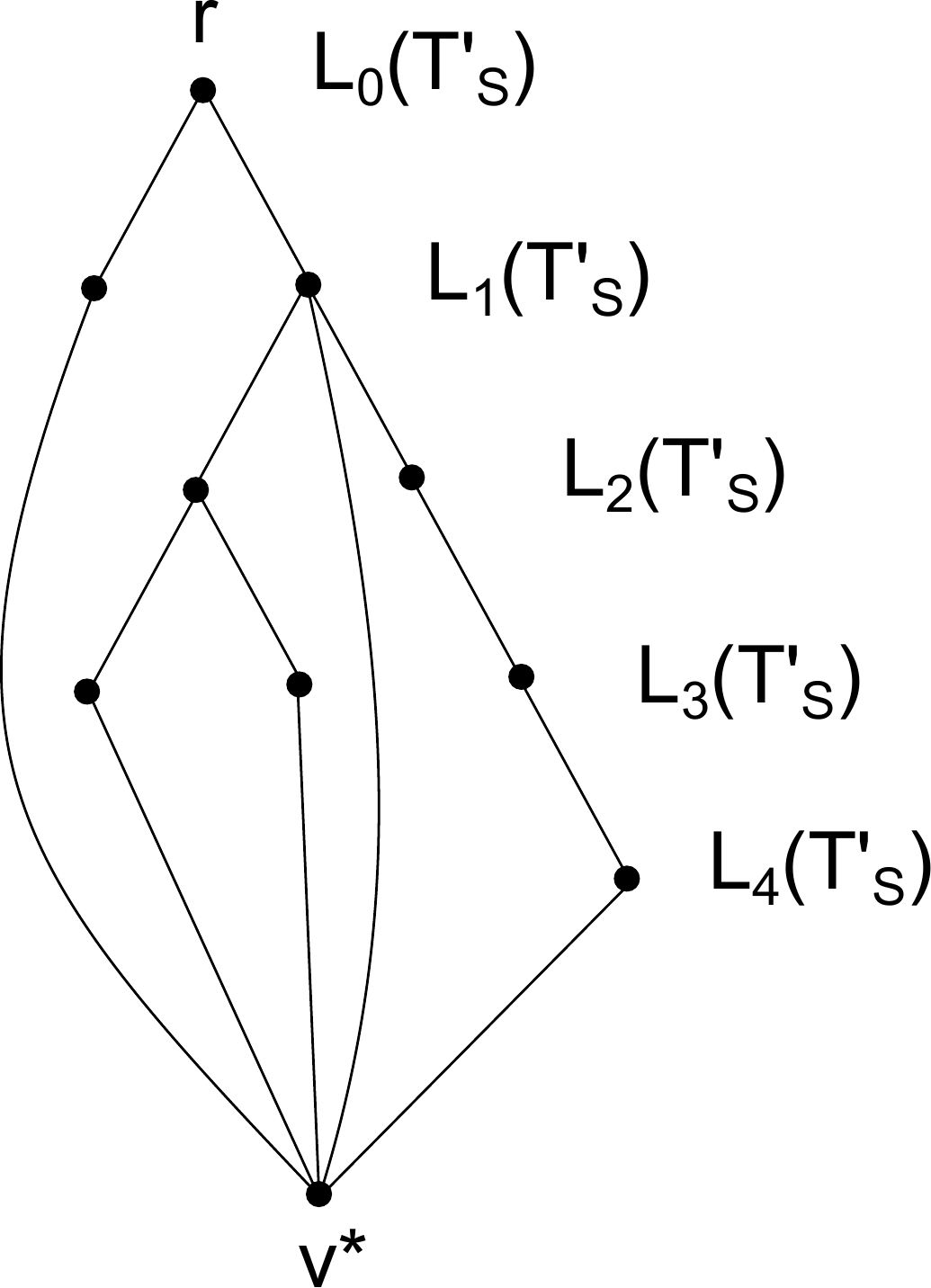}
\caption{\emph{Left}: Forming $T_S$ from a given tree $T$ and a subset of its nodes $S$. \emph{Middle}: Removing the bridges of $T_S$ to form $T_S'$, and selecting a node $r$. \emph{Right}: Finding $L(v)$ and $L_i(T_S')$.}
\end{center}
\end{figure}

Next, define the function $f:V(T_S')\backslash \{v^*\} \rightarrow \{0,1\}$ by $f(x)=1$ if $x\in S$, $f(x)=0$ if $x\notin S$. 
Finally, for any $a\in V(T_S')\backslash \{v^*\}$, we define $T_a=T_S'[a\cup D(a)]$, $H_a=T_a/rv^*$, $\tilde{T}_{c}=T_S'[\{a\}\cup \{c\} \cup D(c)]$, and 
$\tilde{H}_{c}=\tilde{T}_c/av^*$,
where $c$ is a child of $a$. See Figure 3 for an illustration of these definitions.

\begin{figure}
\begin{center}
\includegraphics[scale=0.3]{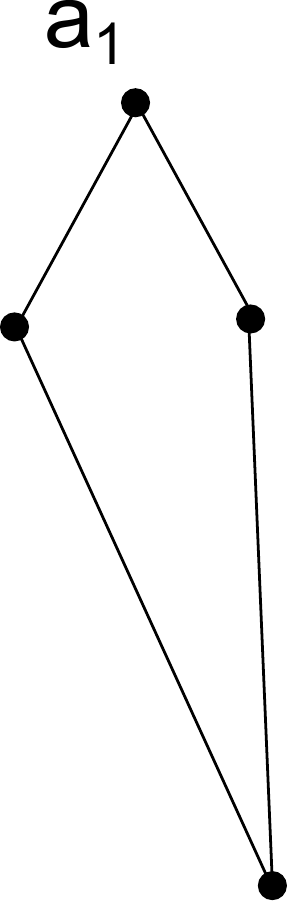}\qquad \qquad
\includegraphics[scale=0.3]{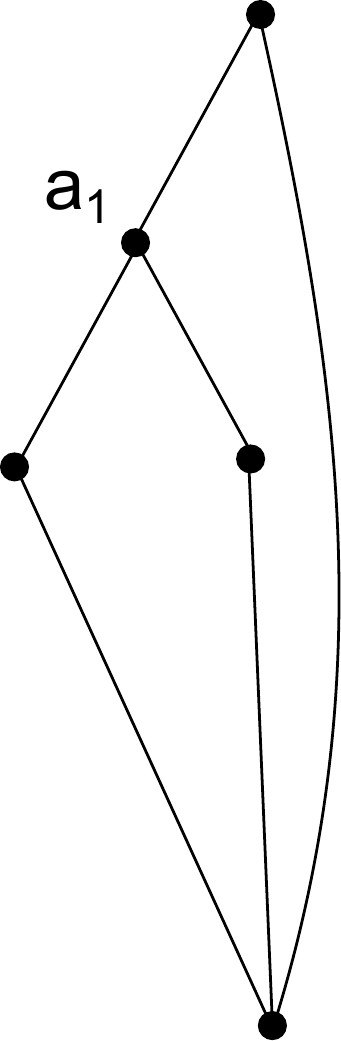}\qquad \qquad
\includegraphics[scale=0.3]{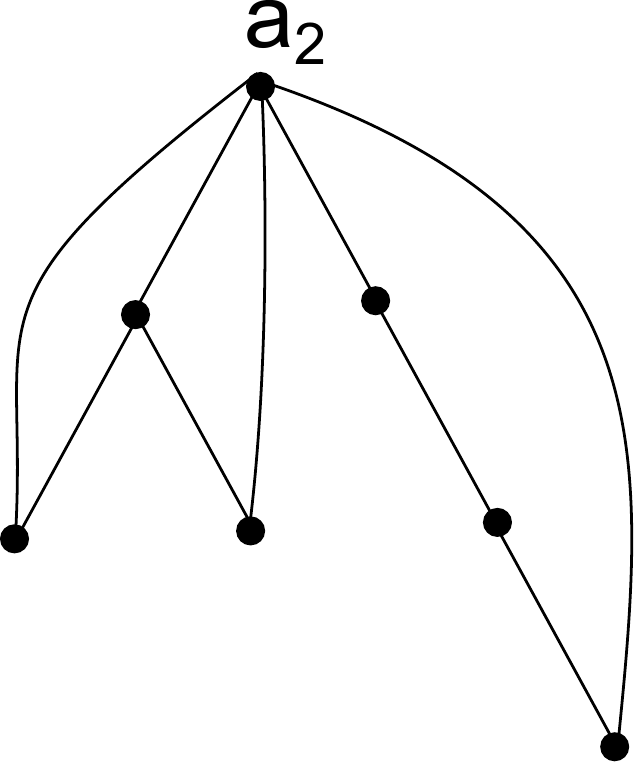}\qquad \qquad
\includegraphics[scale=0.3]{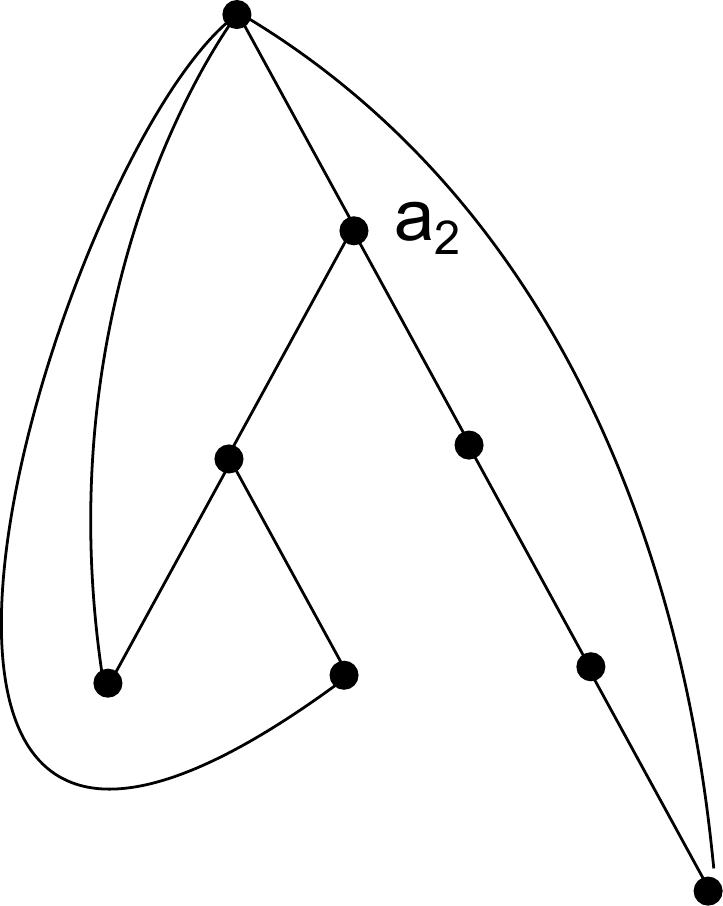}
\caption{\emph{From left to right:} $T_{a_1}$; $\tilde{T}_{a_1}$; $H_{a_2}$; $\tilde{H}_{a_2}$, for two vertices $a_1$ and $a_2$ of the graph $T_S'$ shown in Figure 2, right.}
\end{center}
\end{figure}

Let $a\neq v^*$ be a vertex with children $c_1,\ldots,c_k$, and suppose we know  $P(T_{c_i})$ and $P(H_{c_i})$, $1\leq i \leq k$.
Let $I=\{i : f(c_i)=1\}$, $Z=\{i : f(c_i)=0\}$. Thus, $I$ and $Z$ form a partition of $\{1,\ldots,k\}$.
Then, we can calculate $P(H_a)$ as follows.

\begin{eqnarray*}
P(H_a)&=&
\frac{1}{t^{k-1}}
\prod_{i=1}^k P(\tilde{H}_{c_i})\\
&=& 
\frac{1}{t^{k-1}}
\prod_I P(\tilde{H}_{c_i})
\prod_Z P(\tilde{H}_{c_i})\\ 
&=&
\frac{1}{t^{k-1}}
\prod_I P(T_{c_i})
\prod_Z \bigl(P(T_{c_i})-P(H_{c_i})\bigr)
\end{eqnarray*}

\noindent Here, the first equality follows from (2) and the third equality follows from (1).

Next, we will compute $P(T_a)$ by considering two cases: $a$ is either in $S$ or not. Let $P_1(T_a)=P(T_a)$, where $f(a)=1$, and $P_0(T_a)=P(T_a)$, where $f(a)=0$.
Clearly, $P(T_a)=f(a)P_1(T_a)+\bigl(1-f(a)\bigr)P_0(T_a)$. We now find $P_1(T_a)$ and $P_0(T_a)$ separately as follows.

\begin{eqnarray*}
P_1(T_a)&=&
\frac{1}{\bigl(t(t-1)\bigr)^{k-1}}
\prod_{i=1}^k P(\tilde{T}_{c_i})\\
&=&
\frac{1}{\bigl(t(t-1)\bigr)^{k-1}}
\prod_I P(\tilde{T}_{c_i})
\prod_Z P(\tilde{T}_{c_i})\\
&=&
\frac{\prod_I\bigl(P(T_{c_i})(t-2)\bigr)}{\bigl(t(t-1)\bigr)^{k-1}}
\prod_Z\bigl(P(\tilde{T}_{c_i}+c_iv^*)+P(\tilde{T}_{c_i}/c_iv^*)\bigr)\\
&=&
\frac{\prod_I\bigl(P(T_{c_i})(t-2)\bigr)}{\bigl(t(t-1)\bigr)^{k-1}}
\prod_Z\bigl(P(T_{c_i}+c_iv^*)(t-2)+P(H_{c_i})(t-1)\bigr)\\
&=&
\frac{\prod_I\bigl(P(T_{c_i})(t-2)\bigr)}{\bigl(t(t-1)\bigr)^{k-1}}
\prod_Z\Bigl(\bigl(P(T_{c_i})-P(H_{c_i})\bigr)(t-2)+P(H_{c_i})(t-1)\Bigr)\\
&=&
\frac{1}{\bigl(t(t-1)\bigr)^{k-1}}
\prod_I\bigl(P(T_{c_i})(t-2)\bigr)
\prod_Z\bigl((t-2)P(T_{c_i})+P(H_{c_i})\bigr)\\
P_0(T_a)&=&P(T_a+av^*)+P(T_a/av^*)=P_1(T_a)+P(H_a)
\end{eqnarray*}

\noindent Here, in the computation of $P_1(T_a)$, the first equality follows from (2), the third equality follows from (2) and (1), the fourth equality follows from (2), and the fifth equality follows from (1). In the computation of $P_0(T_a)$, the first equality follows from (1), and the second follows from the definitions of $P_1$ and $H_a$.

Thus, we have shown how to express $P(T_a)$ and $P(H_a)$ in terms of $P(T_{c_i})$ and $P(H_{c_i})$, $1\leq i \leq k$. Using these identities, we propose the following algorithm for finding the chromatic polynomial of a generalized vertex join tree $T_S$.

\begin{framed}
\textbf{Algorithm 1} \par
\begin{enumerate}
\item Find and remove the bridges of $T_S$ to acquire $T_S'$ \par
\item For $i=\mathcal{L}$ to $0$ \par
\qquad Compute $P(T_a)$ and $P(H_a)$ for each $a\in L_i(T_S')$ \par
\item Compute $P(T_S)$ using (9)
\end{enumerate}
\end{framed}

\begin{theorem}
Algorithm 1 finds the correct chromatic polynomial of a generalized vertex join tree $T_S$ using polynomial time and space.
\end{theorem}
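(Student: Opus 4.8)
The plan is to verify two things: correctness of the recurrences and the polynomial time/space bound. For correctness, I would argue by induction on $\mathcal{L}-i$ that after the iteration of Step 2 for level $i$, the quantities $P(T_a)$ and $P(H_a)$ stored for each $a\in L_i(T_S')$ are correct. The base case is a leaf $a$ of $T_S'-v^*$ (a vertex at level $\mathcal{L}$, or more generally any vertex whose only neighbor other than $v^*$ is its parent): such a vertex has no children, so $T_a$ is either a single edge $av^*$ (if $f(a)=1$) or an isolated vertex plus an isolated $v^*$ (if $f(a)=0$), and $H_a=T_a/rv^*$ reduces to a known small graph; in each case the claimed formulas specialize correctly (with the empty products equal to $1$). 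For the inductive step, I would simply point out that all the displayed equalities deriving $P(H_a)$, $P_1(T_a)$, $P_0(T_a)$ from the $P(T_{c_i}),P(H_{c_i})$ were already justified line-by-line in the text via identities (1) and (2), and that the children $c_i$ lie at level $i+1$, so their values are available by the inductive hypothesis. Finally, at the root $r$ we obtain $P(T_S')=P(T_r)$ (since $D(r)\cup\{r\}$ is all of $T_S'$), and then (9) gives $P(T_S)$; I should note that removing $B$ to form $T_S'$ and reattaching the $b$ bridges is exactly the content of (9), and that if $T_S'$ happens to be a single vertex $v^*$ the degenerate cases $S=\emptyset$, $|S|=1$ handled at the start of the section apply, so we may assume a genuine root $r\ne v^*$ exists.

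For the complexity bound, I would observe that Step 1 (finding and deleting bridges) takes linear time by a standard DFS, and that $T_S'$ has at most $n+1$ vertices and at most $n+|S|\le 2n$ edges. In Step 2 each vertex $a$ is processed once, and processing $a$ amounts to forming the products over its children $c_1,\dots,c_k$ in the formulas for $P(H_a)$, $P_1(T_a)$, $P_0(T_a)$. Each such product involves $k$ factors, each factor being a polynomial in $t$ of degree at most $n+1$; multiplying polynomials of degree $O(n)$ costs $O(n^2)$ time (or better), and the division by $t^{k-1}$ or $(t(t-1))^{k-1}$ is exact and also polynomial. Hence processing $a$ costs $O(k\cdot n^2)$, and since $\sum_a k = $ (number of parent-child pairs) $\le n$, the total over all vertices is $O(n^3)$. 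Step 3 is a single polynomial multiplication by $(t-1)^b$, costing $O(n^2)$. The space used is $O(n)$ stored polynomials each of size $O(n)$, i.e.\ $O(n^2)$, so the algorithm runs in low-order polynomial time and space.

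The main obstacle, I expect, is being careful about the degenerate and boundary situations rather than the core recurrence: namely (i) checking that the empty-product conventions make the formulas correct for leaves and for vertices all of whose children lie in $I$ (so that $Z=\emptyset$), (ii) confirming that the definitions of $T_a$, $H_a$, $\tilde T_c$, $\tilde H_c$ behave correctly when a child $c$ is itself a leaf, so that $P(T_c)$ and $P(H_c)$ are the values actually computed at the deeper level, and (iii) making sure the $r\ne v^*$ chosen in $T_S'$ really exists and that the intersection conditions needed to invoke (2) — i.e.\ that the relevant subgraphs overlap in exactly a $K_1$ (the vertex $a$) or a $K_2$ (the edge $av^*$) — genuinely hold, which is where the contraction operations $/rv^*$ and $/av^*$ and the case split on $f(a)$ do their work. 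Once these checks are in place, the theorem follows from the already-justified algebraic identities together with the bookkeeping above.
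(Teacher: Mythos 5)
Your proposal is correct and follows essentially the same route as the paper: correctness by induction over the levels $L_{\mathcal{L}},\ldots,L_0$ using the recurrences for $P(T_a)$, $P(H_a)$ already justified in the text, then reattaching the bridges via (9), and a complexity count based on one pass over the vertices with $\mathcal{O}(n)$ polynomial multiplications of degree-$\mathcal{O}(n)$ polynomials. The only difference is cosmetic: the paper uses FFT multiplication to get $\mathcal{O}(n^2\log n)$ time where you settle for $\mathcal{O}(n^3)$, which still suffices for the stated claim of polynomial time and $\mathcal{O}(n^2)$ space.
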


\begin{proof}
We already showed that by finding the bridges of $T_S$ and the chromatic polynomial of $T_S'$, we can compute $P(T_S)$. Thus, we only need to show that Step~2 of the algorithm correctly computes $P(T_S')$.

We also established that we can compute $P(T_a)$ and $P(H_a)$ for any $a \in T_S'$, $a\neq v^*$, if we know $P(T_c)$ and $P(H_c)$ for every child $c$ of $a$. Note that this condition is automatically satisfied for vertices which have no children.

By construction, vertices in $L_{\mathcal{L}}(T_S')$ have no children, so $P(T_a)$ and $P(H_a)$ can be found immediately for any vertex $a\in L_{\mathcal{L}}(T_S')$. For $\mathcal{L}>i\geq 0$, a vertex $a$ in $L_i(T_S')$ either has no children, or has all of its children in $L_{i+1}(T_S')$. In either case, $P(T_c)$ and $P(H_c)$ are known for every child $c$ of $a$ -- either vacuously or inductively. Thus, $P(T_a)$ and $P(H_a)$ can also be computed. Since by construction, $P(T_S')=P(T_r)$ and $L_0(T_S')=\{r\}$, Algorithm 1 indeed finds the correct chromatic polynomial of $T_S$.

To verify the time- and space-complexity of the algorithm, let $|V(T)|=n$. By construction, $|E(T_S)|=\mathcal{O}(n)$. Thus, using the algorithm of Tarjan \cite{tarjan}, we can find all bridges of $T_S$ in $\mathcal{O}(n)$ time.

Next, we can create a tree data structure with root $r$ using an adjacency-preserving bijection between the nodes of the data structure and the vertices of $T_S'-v^*$. This will require $\mathcal{O}(n)$ time and space due to the sparsity of the graph. To evaluate the function $L$ at each vertex of $T_S'-v^*$, we find the level of the corresponding node in the data structure, which can be done with $\mathcal{O}(n)$ total time. From this data structure, we can easily find $\mathcal{L}$ and $L_i(T_S')$, $\mathcal{L}\geq i \geq 0$, as well as the parent, children, and descendants of each node. 

Finally, each evaluation of $P(T_a)$ and $P(H_a)$ requires the multiplication of $\mathcal{O}(a_k)$ polynomials, where $a_k$ is the number of children of $a$. 
Since we evaluate $P(T_a)$ and $P(H_a)$ for $\mathcal{O}(n)$ vertices, and the total number of children in $G_S$ is $\mathcal{O}(n)$, the evaluation of $P(T_S)$ requires the multiplication of $\mathcal{O}(n)$ polynomials. Each of these polynomials has degree $\mathcal{O}(n)$, since $P(T_S)$ has degree $\mathcal{O}(n)$. The time-complexity of multiplying two polynomials of degree $n$, using a Fast Fourier Transform, is $\mathcal{O}(n\log n)$, so the total time complexity of Algorithm 1 is $\mathcal{O}(n^2\log n)$.

A polynomial of degree $n$ can be stored with $\mathcal{O}(n)$ space, and we have to store no more than $2n$ polynomials at a time, in addition to the auxiliary tree data structure. Thus, the total space-complexity is $\mathcal{O}(n^2)$.
\qed
\end{proof}

\section{Flow polynomials of outerplanar graphs}

Let $G$ be a simple biconnected outerplane graph with bounded faces $F_1,\ldots, F_s$ and outer face $F_*$. The weak dual of $G$ is a tree $T=(V,E)$, where $v_i\in V$ corresponds to face $F_i$ of $G$. Suppose $F_i$ shares $f_i$ edges with $F_*$. If we let $v^*$ correspond to $F_*$ and $S$ be the multiset over $V$ where $v_i$ appears $f_i$ times, then $T_S$ is the dual of $G$. The following lemma implies that the converse is also true.


\begin{lemma}
A generalized vertex join tree $T_S$ is the dual of a simple biconnected outerplanar graph $G$ if and only if every vertex of $T_S$ has degree at least 3.
\end{lemma}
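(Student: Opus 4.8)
The plan is to prove both directions of the biconditional by relating the structure of $T_S$ to the face structure of a plane graph via planar duality.

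\medskip

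\noindent\textbf{($\Rightarrow$) Necessity of the degree condition.} First I would assume $T_S$ is the dual of a simple biconnected outerplane graph $G$. Since $G$ is simple, it has no loops, so $T_S=G^*$ has no bridges; since $G$ is biconnected (2-connected), $G$ has no cut vertices, which translates under duality to $G^*$ having no parallel edges being forced... more carefully: the faces of a simple 2-connected plane graph are bounded by cycles, so every face of $G$ has at least $3$ edges on its boundary, and hence every vertex of $G^*=T_S$ has degree at least $3$. I would spell out the two potential degree-$1$ and degree-$2$ obstructions: a vertex of degree $1$ in $T_S$ would correspond to a face of $G$ bounded by a single edge (a loop) — impossible in a simple graph; a vertex of degree $2$ would correspond to a face bounded by two edges, i.e.\ a pair of parallel edges or a vertex of degree $\le 2$ structure that contradicts simplicity/2-connectivity. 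So the forward direction is essentially a translation of ``faces of a simple 2-connected plane graph are cycles'' through the dual.

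\medskip

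\noindent\textbf{($\Leftarrow$) Sufficiency.} Conversely, suppose every vertex of $T_S$ has degree at least $3$. Since $G=T-v^*$-part... I would take $G := (T_S)^*$, drawn in the plane (note $T_S$ is planar: a tree plus one vertex joined to a subset stays planar, as the added edges can be routed around $T$), so that $G$ is connected and $(G^*)^* = G$ gives $G^* = T_S$. It remains to show $G$ is simple, biconnected, and outerplanar. \emph{Simplicity:} $G$ has a loop iff $T_S$ has a bridge, and $G$ has parallel edges iff $T_S$ has a vertex of degree... — here I use that minimum degree $\ge 3$ in $T_S$ rules out both; I'd verify the bridge case separately (a bridge in $T_S$ would be an edge whose removal disconnects, but degree $\ge 3$ alone doesn't preclude bridges, so I need that $T_S$ with all degrees $\ge 3$ and the specific tree-plus-vertex structure has no bridges — any bridge of $T_S$ must be an edge of $T$, and its tree-endpoint $b$ would then need its other $\ge 2$ incident edges going to $v^*$ or within $T$, which still must be examined; likely the clean statement is that $G$ simple $\iff$ $T_S$ bridgeless with min degree $\ge 2$, and we then argue bridgelessness follows). \emph{Biconnectedness:} $G$ is 2-connected iff $G^* = T_S$ is 2-edge-connected and has min degree... — I would instead use the standard duality fact that a plane graph $G$ is 2-connected iff its dual is 2-connected; and $T_S$ is 2-connected precisely because removing any vertex of a tree-plus-apex with all degrees $\ge 3$ leaves it connected (the apex $v^*$ is adjacent to every leaf-component created by deleting a tree vertex). \emph{Outerplanarity:} this is the key point — $v^*$ corresponds to the outer face $F_*$ of $G$, and I must show every vertex of $G$ lies on $F_*$; equivalently, the weak dual of $G$ (obtained by deleting $v^*$ from $T_S$) is a tree, which it is by construction since $T_S - v^* = T$. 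I'd invoke the known characterization (used implicitly at the start of Section 5) that a biconnected plane graph is outerplanar iff its weak dual is a tree.

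\medskip

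\noindent\textbf{Main obstacle.} I expect the delicate part to be the bookkeeping in the ($\Leftarrow$) direction: cleanly verifying that ``every vertex of $T_S$ has degree $\ge 3$'' is exactly the right hypothesis to simultaneously force $G=(T_S)^*$ to be \emph{loopless} (no degree-$1$ vertex in $T_S$), \emph{without parallel edges} (ruling out the degree-$2$ / parallel-class obstruction, which requires knowing that multiplicities in $S$ and the tree edges interact correctly), and to guarantee $2$-connectivity. The cycle/cleanest route is to establish, as a preliminary observation, the dictionary ``bridge in $G$ $\leftrightarrow$ loop in $T_S$... '' — wait, rather: ``loop in $G \leftrightarrow$ bridge in $T_S$'', ``parallel edges in $G \leftrightarrow$ degree-$2$ vertex or $2$-edge-cut in $T_S$'', and ``$G$ $2$-connected $\leftrightarrow$ $T_S$ $2$-connected'' — then check each against min degree $\ge 3$. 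Once that dictionary is in place, both directions are short; assembling it correctly is where care is needed.
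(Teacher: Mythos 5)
Your forward direction is fine and is essentially the paper's argument: a vertex of $T_S$ of degree $1$ or $2$ would correspond to a face of $G$ bounded by a loop or by a pair of parallel edges, contradicting simplicity (the paper also disposes of degree $0$). The gap is in the converse, where you take a genuinely different route (set $G:=(T_S)^*$ and verify simplicity, biconnectivity and outerplanarity through a duality dictionary) but leave unproved exactly the steps that carry the content of the lemma. Under your dictionary, ``$G$ is loopless'' means ``$T_S$ has no bridge'' and ``$G$ has no parallel edges'' means ``$T_S$ has no bond of size two''; together this says $T_S$ is $3$-edge-connected, and minimum degree $\geq 3$ does \emph{not} imply this for general graphs. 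So the whole burden of the sufficiency direction is the structural claim that a tree plus an apex with all degrees $\geq 3$ has no edge cut of size $\leq 2$. You flag this (``which still must be examined'', ``assembling it correctly is where care is needed'') but never establish it. It is true --- for instance, any cut separating a set $B\subseteq V(T)$ from the rest with at most one edge to $v^*$ forces a leaf of $T$ inside $B$ whose total multiplicity in $S$ is at most $1$, hence a vertex of degree $\leq 2$ --- but without such an argument the direction is not proved. Likewise, the transfer of $2$-connectivity from $T_S$ to $G$ is asserted via a ``standard duality fact'' that needs care for plane multigraphs with loops and parallel edges, and the outerplanarity step silently assumes the face of $G$ corresponding to $v^*$ can be taken as the outer face; justifying that requires noting that every cycle of $T_S$ passes through $v^*$ (since $T_S-v^*=T$ is acyclic), so every face of a suitable embedding of $T_S$ is incident to $v^*$, and that outerplanarity only asks for \emph{some} embedding with all vertices on one face.

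For comparison, the paper sidesteps the dictionary entirely and proves sufficiency by induction on $|V(T_S)|$: the base case (two vertices joined by at least three parallel edges) is the dual of a cycle, and in the inductive step a leaf $v$ of $T$, which must send $\ell\geq 2$ edges to $v^*$, is deleted and replaced by one extra $uv^*$ edge; the smaller generalized vertex join tree is by induction the dual of a simple biconnected outerplanar graph, and gluing a cycle onto an edge shared between the face of $u$ and the outer face realizes $T_S$ as a dual. That constructive argument avoids both the $3$-edge-connectivity lemma and the duality-of-$2$-connectivity issue. Your approach can be completed, but as written the key lemmas are only named, not proved, so the sufficiency direction has a genuine gap.
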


\begin{proof}

We have already shown that if $G$ is a simple biconnected outerplanar graph, the dual of $G$ is a generalized vertex join tree $T_S$. If $T_S$ has a vertex $v_0$ with degree 0, the face of $G$ corresponding to $v_0$ has no edges, meaning $G$ has only one face (the outer face), so $G$ consists of just one vertex $u_0$. If $T_S$ has a vertex $v_1$ with degree 1, the face of $G$ corresponding to $v_1$ has one edge, meaning it is a loop. If $T_S$ has a vertex $v_2$ with degree 2, the face of $G$ corresponding to $v_2$ has two edges, meaning they are parallel edges. All three cases contradict $G$ being simple and biconnected. Thus, if $T_S$ is the dual of a simple biconnected outerplanar graph, every vertex of $T_S$ has degree at least 3.

Now, suppose $T_S$ is a generalized vertex join tree, and that every vertex of $T_S$ has degree at least 3. We will show that $T_S$ is the dual of a simple biconnected outerplanar graph by induction on the number of vertices of $T_S$. If $T_S$ has 2 vertices $v$ and $v^*$, all the edges in $T_S$ must join $v$ to $v^*$ since by construction, $T_S$ can have no loops. Thus, $T_S$ is the dual of some cycle of size at least 3 (which is simple, biconnected, and outerplanar). Next, let $T_S$ be a generalized vertex join tree on $k+1$ vertices with minimum vertex degree at least 3, and let $v$ be a leaf of $T$. Since $T$ is a tree, $v$ has a unique neighbor $u$ in $T$ and there is exactly one edge between $u$ and $v$. Moreover, by assumption, $v$ is connected to $v^*$ by $\ell\geq 2$ edges and $u$ is incident to at least two edges other than $uv$. Thus, if we delete $v$ from $T_S$ and add an edge from $u$ to $v^*$, we obtain a generalized vertex join tree on $k$ vertices, which by induction is the dual of some simple biconnected outerplanar graph $G$. In this graph, $u$ corresponds to some bounded face $F$ and $v^*$ corresponds to the outer face $F_*$. Since we added an edge $uv^*$, $F$ shares at least one edge $e$ with $F_*$. Now, if we glue a cycle of size $\ell-1$ to $e$, we obtain a simple biconnected outerplanar graph whose dual is $T_S$. \qed

\end{proof}

From the above arguments, the following lemma easily follows. Note that a biconnected outerplanar graph with loops has an embedding in which no loop edge borders the outer face, and therefore the weak dual graph is a tree and the (strong) dual graph is a generalized vertex join tree. Since chromatic and flow polynomials are independent of embedding, we can consider such an embedding without loss of generality.

\begin{lemma}
Let $G$ be a biconnected outerplanar graph, and $T_S$ be its dual generalized vertex join tree. $G$ has multiple edges if and only if $T_S$ has degree 2 vertices, and $G$ has loops if and only if $T_S$ has degree 1 vertices.
\end{lemma}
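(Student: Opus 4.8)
The plan is to read everything off the face–vertex correspondence underlying planar duality. Fix the embedding of $G$ supplied by the remark preceding the lemma, in which no loop borders the outer face, so that $T_S=G^*$ is a generalized vertex join tree with $v^*$ corresponding to $F_*$. The first step is to recall the standard fact that in a connected plane graph the degree of a dual vertex equals the number of edges on the boundary walk of the corresponding face; thus $\deg_{T_S}(v_i)$ is exactly the number of edges bounding $F_i$. Under this translation, a bounded face with a single boundary edge forces that edge to be a loop, and a bounded face with exactly two boundary edges forces those edges to have the same two endpoints, i.e.\ to be parallel. These are precisely the observations already extracted in the proof of Lemma~1, so the ``if'' directions of both equivalences are immediate: a degree-$1$ vertex of $T_S$ yields a loop of $G$, and a degree-$2$ vertex yields a pair of multiple edges of $G$.

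For the ``only if'' directions I would argue geometrically in the fixed outerplanar embedding. Suppose $G$ has parallel edges $e_1,e_2$ joining $u$ and $w$. Since every vertex of $G$ lies on the outer face, no vertex can sit on the bounded side of the Jordan curve $e_1\cup e_2$; this forces all the $u$--$w$ edges of $G$ to appear consecutively in the rotation at $u$, and the region cut off between two consecutive such edges (other than the one containing $F_*$) is a bounded face bounded only by those two parallel edges, whose dual vertex therefore has degree $2$. Similarly, if $G$ has a loop $e$ at a vertex $u$, then in the chosen embedding $e$ does not border $F_*$; one checks (again using outerplanarity to rule out $F_*$ lying on the bounded side of $e$) that $e$ separates $F_*$ from a bounded region $R$, and any vertex inside $R$ would be off the outer face, which is impossible, so $R$ is a single bounded face bounded only by $e$ and its dual vertex has degree $1$. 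The tiny cases $|V(G)|\le 2$ are degenerate: there $G$ is a bundle of parallel edges (possibly with loops at the single vertex) whose dual is a cycle, and they can be checked by hand, so they do not threaten the statement.

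The step I expect to be the main obstacle is this ``only if'' direction, and within it the point that a multiple edge actually produces a \emph{bounded} two-edge face and a loop a \emph{bounded} one-edge face. One has to use outerplanarity in two places at once --- to force parallel edges to be consecutive in the rotation, and to prevent the short face produced from being the outer face --- and one has to dispatch the small graphs where the outer face can itself be a bigon. Once those points are pinned down, the remainder is pure bookkeeping through the face–vertex duality already set up in the proof of Lemma~1; an alternative, if the geometric argument feels delicate, is to mimic the inductive construction in that proof and simply track when loops and multiple edges are created as $G$ is rebuilt from $T_S$.
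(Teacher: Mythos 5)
Your ``if'' directions coincide with what the paper actually uses: the paper gives no separate argument for this lemma, it simply points back to the observations in the proof of Lemma 1 that a dual vertex of degree $k$ corresponds to a face with $k$ boundary edge-sides, a one-edge face forces a loop, and a two-edge face forces a parallel pair. Where you go beyond the paper --- the ``only if'' directions --- is where there is a genuine gap. From outerplanarity you correctly conclude that no \emph{vertex} can lie in the region between two consecutive parallel $u$--$w$ edges, but edges that introduce no new vertex, namely loops at $u$ or at $w$, can still be drawn inside that region; indeed the embedding fixed by the remark may well place them there, since loops must avoid the outer face. In that case the region between the two parallel edges is not a face bounded by those two edges alone: its boundary also contains the nested loop, so the corresponding dual vertex has degree at least $3$, and your construction of a degree-$2$ vertex collapses. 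Concretely, take the triangle $uwx$, add a second $uw$ edge with the bigon on the side away from $x$, and draw a loop at $w$ inside that bigon: every vertex lies on the outer face, the loop avoids the outer face, yet the four faces of this embedding have $3,3,1,3$ boundary edges, so there is no two-edge face at all. Thus the two-edge face you are after need not exist for the embedding under consideration, and the step cannot be repaired merely by choosing the parallel edges more carefully.

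The analogous nesting issue appears in your loop argument (a loop at $u$ may be nested inside $R$), but there it is harmless: pass to an innermost loop and its inside is a one-edge face. For the parallel-edge case you would need something extra, e.g.\ first re-embedding any loops nested in the bigon into a different bounded face before forming the dual, or restricting attention to loopless $G$, or simply contenting yourself with the ``if'' directions, which is all that is needed for the paper's subsequent use of the lemma (that the dual of any biconnected outerplanar graph is a generalized vertex join tree). Your final suggestion of mimicking the inductive reconstruction from the proof of Lemma 1 would be a sound alternative route, but as written it is only a sketch, and the geometric argument you actually give is the one that breaks on the example above.
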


\noindent Thus, the dual of any (not necessarily simple) biconnected outerplanar graph is a generalized vertex join tree $T_S$. Now, let $G$ be any outerplanar graph. If $G$ has several components, by (5) we can compute the flow polynomial of each component separately. If a connected outerplanar graph is not biconnected, then it has a bridge, so its flow polynomial is zero. Thus, we can compute the flow polynomial of any outerplanar graph using the algorithm from Section 4; we state this in the corollary below.

\begin{corollary}
The flow polynomial of a (not necessarily simple) outerplanar graph $G$ can be found in polynomial time and space using Algorithm 1.
\end{corollary}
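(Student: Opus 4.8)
The plan is to assemble pieces already in place: the reduction to the biconnected case, the duality identity (4), Lemmas 1 and 2, and the complexity bound of Theorem 1; together with a few standard linear-time graph algorithms this gives the corollary almost immediately. First I would dispose of disconnected inputs. If $G$ has components $G_1,\ldots,G_k$, each $G_i$ is again outerplanar, and by (5) $F(G)=\prod_i F(G_i)$, so it suffices to describe the procedure for a connected graph and run it once per component; since $\sum_i |V(G_i)|=|V(G)|$ this costs no more than a single call on a connected graph of the same total size.

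Second, for connected $G$ I would compute its decomposition into blocks (maximal biconnected subgraphs and bridge-edges) in linear time via depth-first search. The flow polynomial is multiplicative over blocks — this follows from (5) after splitting at cut vertices, since duplicating a cut vertex changes neither the edge set nor any nowhere-zero flow — so $F(G)$ is the product of the flow polynomials of the blocks. If some block is a single edge, that edge is a bridge of $G$, no nowhere-zero flow exists, and we output $F(G)=0$ (this is the observation quoted in the text just before the corollary). Otherwise every block $B$ is biconnected and outerplanar, so by the remarks preceding Lemma 1 together with Lemmas 1 and 2, $B$ admits an outerplane embedding whose strong dual is a generalized vertex join tree $T_S$. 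I would compute such an embedding and its dual in polynomial time: outerplanarity testing and embedding are linear, and $|V(B^*)|$ equals the number of faces of $B$, which is $\mathcal{O}(|V(B)|)$ by Euler's formula since an outerplanar graph has $\mathcal{O}(|V|)$ edges. Running Algorithm 1 on $T_S$ returns $P(B^*)$, and since $B$ is connected, (4) gives $F(B)=P(B^*)/t$. Multiplying the resulting polynomials over all blocks yields $F(G)$.

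Finally I would bound the cost. Finding components, the block decomposition, the outerplane embeddings, the dual constructions, and the multiplication of the $\mathcal{O}(|V(G)|)$ many block polynomials (each of degree $\mathcal{O}(|V(G)|)$, hence each product computable in $\mathcal{O}(n\log n)$ via FFT) are all polynomial; the dominant contribution is the collection of calls to Algorithm 1, which by Theorem 1 run in $\mathcal{O}(n^2\log n)$ time and $\mathcal{O}(n^2)$ space on an $n$-vertex tree, and summing these over the blocks keeps the total polynomial in $|V(G)|$.

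The step I expect to require the most care is not computational but structural: verifying that the graph handed to Algorithm 1 at each stage genuinely is a generalized vertex join tree meeting the algorithm's hypotheses — i.e., that the dual of an arbitrary (possibly non-simple) biconnected outerplanar block is exactly of the form $T_S$, with loops and multiple edges of $B$ accounted for by degree-$1$ and degree-$2$ vertices of $T_S$. This is precisely the content of Lemmas 1 and 2 (and the embedding remark between them), so the only real obstacle is already resolved, and the proof reduces to recording this reduction cleanly.
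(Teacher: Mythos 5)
Your proposal is correct and its core coincides with the paper's proof: dualize each biconnected piece (using the remark before Lemma 1 and Lemmas 1--2 to know the dual is a generalized vertex join tree), run Algorithm 1 to get the chromatic polynomial of the dual, recover the flow polynomial via identity (4), and charge essentially all of the cost to Algorithm 1, giving $\mathcal{O}(n^2\log n)$ time and $\mathcal{O}(n^2)$ space. Where you genuinely diverge is the reduction to the biconnected case. The paper splits only into connected components and then asserts that a connected outerplanar graph that is not biconnected has a bridge, hence flow polynomial zero; you instead compute the block decomposition and use multiplicativity of $F$ over blocks (justified by splitting at cut vertices), declaring $F(G)=0$ only when some block is a single edge. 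Your extra care is not redundant: the paper's assertion is false as stated --- two triangles sharing a vertex form a connected, bridgeless, outerplanar graph that is not biconnected, and its flow polynomial is $(t-1)^2\neq 0$ --- so the paper's shortcut silently omits outerplanar graphs with cut vertices but no bridges, while your block-based reduction handles them correctly at no asymptotic cost. In short, you take the same algorithmic route but with a sounder structural reduction; the paper's version is shorter but needs exactly the repair you supply.
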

\begin{proof}
If $|V(G)|=n$, then since $G$ is outerplanar, $|E(G)|=\mathcal{O}(n)$. We can start by taking $\mathcal{O}(n)$ time to find the connected components and detect any bridges in $G$, and henceforth suppose without loss of generality that $G$ is biconnected. Then, $G^*$ can also be found in $\mathcal{O}(n)$ time by first finding the faces of $G$ via a breadth-first traversal and then forming the necessary vertices and adjacencies for $G^*$. Finally, Algorithm 1 can be applied to find $P(G^*)$, and $F(G)$ can be found using (4). Thus, the total time complexity of this procedure is dominated by the time-complexity of Algorithm 1 and is therefore $\mathcal{O}(n^2 \log n)$. The space-complexity is equal to the space complexity of Algorithm 1, plus the space necessary to store the dual of $G$. Thus, the total space complexity for finding $F(G)$ is $\mathcal{O}(n^2)$.
\qed
\end{proof}


\section{Generalized vertex joins of cliques and cycles}

In this section, we compute the chromatic polynomials of generalized vertex joins of complete graphs and cycles, and use duality to compute the flow polynomials of generalized vertex joins of cycles.

\subsection{Chromatic polynomials of generalized vertex join cliques}

Let $K=(V,E)$ be a complete graph, $S$ be a multiset over $V$ and let $K_S$ be the generalized vertex join of $K$ using $S$. For short, we will call $K_S$ a \emph{generalized vertex join clique}. Let $|V|=n$, $S'$ be the underlying set of $S$ and $|S'|=s$. Using (3), (2), and (6), we have 

\begin{equation*}
P(G_S)= P(G_{S'})=\frac{P(K_n)P(K_{s+1})}{P(K_s)}=(t-s)\prod_{i=0}^{n-1} (t-i).
\end{equation*}

\noindent Since in general complete graphs are not planar, in this case there is no useful dual result about flow polynomials.

\subsection{Chromatic polynomials of generalized vertex join cycles}

Let $C=(V,E)$ be a cycle, $S$ be a multiset over $V$ and let $C_S$ be the generalized vertex join of $C$ using $S$. For short, we will call $C_S$ a \emph{generalized vertex join cycle}. 
In the literature, graphs of this form have also been called ``generalized wheel" or ``broken wheel" graphs, and have been investigated in other contexts (cf. \cite{brokenwheel}). In the remainder of this section, we will present formulas for $P(C_S)$ and $F(C_S)$. In doing so, we will avoid burdening ourselves with unnecessary formalization, as most of the ideas presented here are fairly intuitive. Instead, whenever possible, we refer the reader to figures in the hope that they will be sufficient to dispel ambiguity.

Let us think of $C_S$ as a plane graph with a standard ``wheel" embedding (see Figure 4).
In this embedding of $C_S$, we will call the edges incident to $v^*$ \emph{spokes} and the edges bordering the outer face \emph{outer edges}. It is easy to see that if $|V|=n\geq 1$ and $|S|=s\geq 1$, $C_S$ has $n$ outer edges, $s$ spokes, and $s$ bounded faces. 
Let $v_1,\ldots,v_n$ be the vertices of $V$ in clockwise order, and let $\phi(C_S)=a_1,\ldots,a_n$ be a string where $a_i$ is the multiplicity of $v_i$ in $S$. This assignment produces a bijection $\phi$ between the set of generalized vertex join cycles and the set of nonnegative integer strings. Furthermore, it is easy to see that the dual of $C_S$ is also a generalized vertex join cycle with $n$ spokes, $n$ bounded faces, and $s$ outer edges, and that $\phi(C_S^*)$ can easily be obtained from $\phi(C_S)$. See Figure 4 and (11) for details.

\begin{figure}
\begin{center}
\includegraphics[scale=0.5]{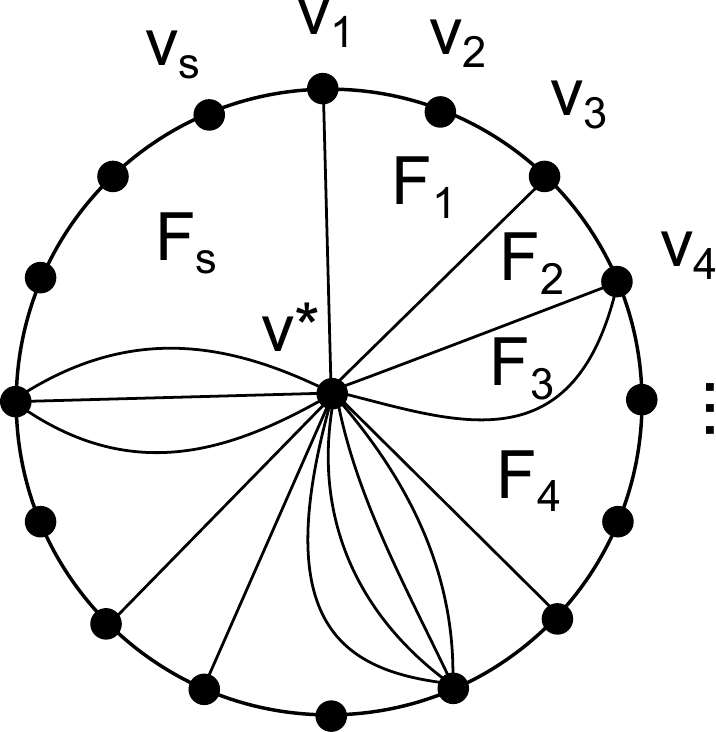}\qquad \qquad
\includegraphics[scale=0.5]{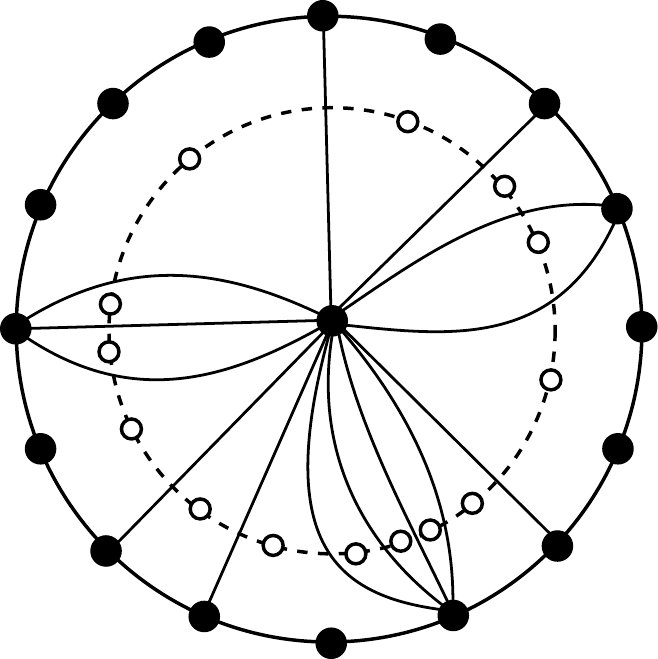}\qquad \qquad
\includegraphics[scale=0.55]{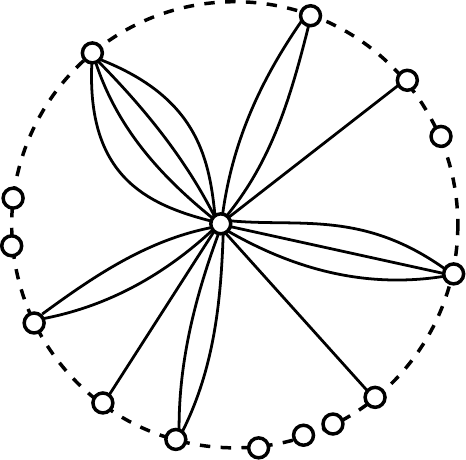}\qquad \caption{\emph{Left:} Generalized vertex join cycle $C_S$.  
\emph{Middle:} Obtaining $C_S^*$ from $C_S$.
\emph{Right:} $C_S^*$, the dual of $C_S$; note that $C_S^*$ is also a generalized vertex join cycle.
\newline
$\phi(C_S)=1,0,1,2,0,0,1,4,0,1,1,0,3,0,0,0$; 
$\phi(C_S^*)=2,1,0,3,1,0,0,0,2,1,2,0,0,4$.}
\end{center}
\end{figure}




Now, suppose $S'$ is the underlying set of $S$; by (3), 
$P(C_S)=P(C_{S'})$.
\noindent Also, $F(C_S)=\frac{1}{t}P(C_S^*)$. But $C_S^*$ is some generalized vertex join cycle $\tilde{C}_{\tilde{S}}$, so by (4) and (3), 
\begin{equation*}
F(C_S)=\frac{1}{t}P(C_S^*)=\frac{1}{t}P(\tilde{C}_{\tilde{S}})=\frac{1}{t}P(\tilde{C}_{\tilde{S'}}),
\end{equation*}
\noindent where $\tilde{S'}$ is the underlying set of $\tilde{S}$. Thus, if we can find the chromatic polynomial of a generalized vertex join cycle $C_{S'}$ where every element of $S'$ has multiplicity 1, we can find the chromatic and flow polynomials of all generalized vertex join cycles.

With this in mind, without loss of generality, let $C_S$ be a generalized vertex join cycle where every element of $S$ has multiplicity 1. Let $e_1,\ldots,e_s$ be the spokes of $C_S$ in clockwise order, and $F_1,\ldots,F_s$ be the corresponding faces of $C_S$ (as labeled in Figure 4). Define $C_S^i=C_S-\{e_i,\ldots,e_s\}$ and $C_S^{s+1}=C_S$. Then, repeatedly using (1), we have the following identity; see Figure 5 for an illustration.


\begin{eqnarray}
\nonumber P(C_S)&=&P(C_S-e_s)-P(C_S/e_s)\\
\nonumber &=&P(C_S^s)-P(C_S^{s+1}/e_s)\\
&=&P(C_S^{s-1})-P(C_S^s/e_{s-1})-P(C_S^{s+1}/e_s)\\
\nonumber &\vdots &\\ 
\nonumber &=&P(C_S^1)-P(C_S^{2}/e_1)-\ldots -P(C_S^s/e_{s-1})-P(C_S^{s+1}/e_s)\\
\nonumber &=&tP(C_n)-\sum_{i=1}^s P(C_S^{i+1}/e_i).
\end{eqnarray}

\begin{figure}
\begin{center}
\includegraphics[scale=0.26]{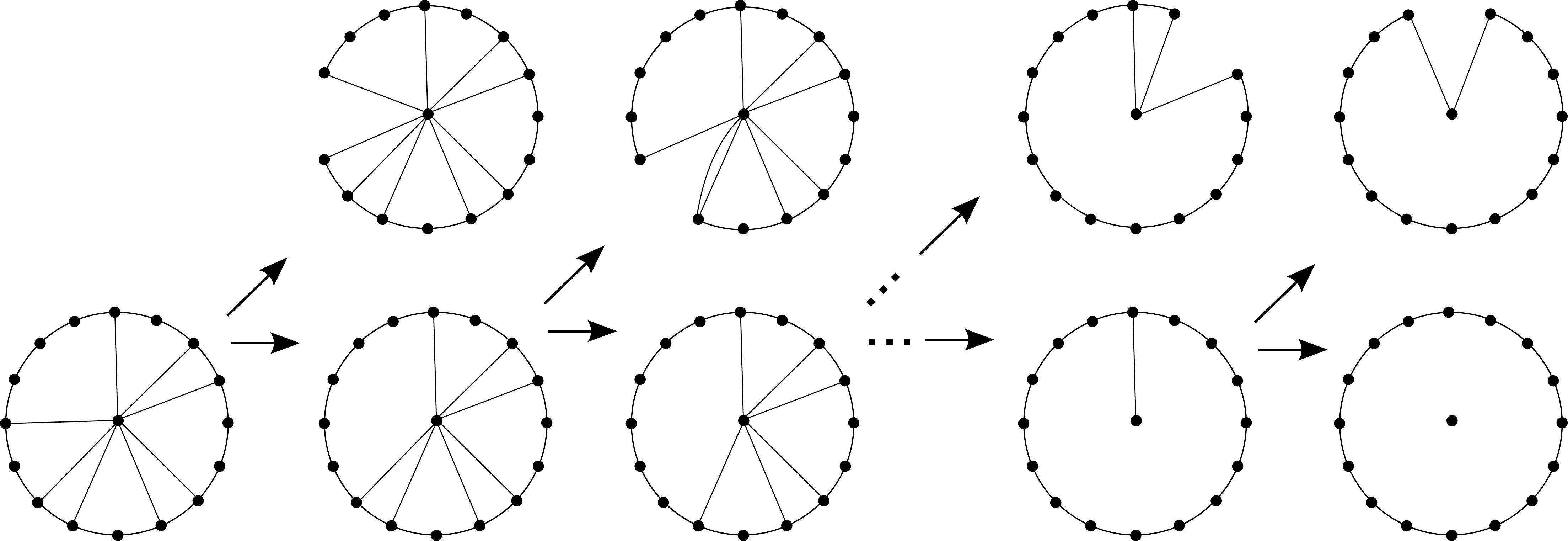}\qquad \caption{Decomposing $C_S$ (on far left) into simpler graphs as described in (10). Using (1), the graphs in the top row can be further decomposed into the cycles making up their bounded faces, since in these graphs, the faces bordering the contraction share only one edge  with the rest of the graph.}
\end{center}
\end{figure}

\noindent We can think of the faces of $C_S$ as cycles, where face $F_i$ has size $f_i$. Then, the faces of $C_S^i$ have sizes $f_1,\ldots, f_{i-2}, u_i$, where 

\begin{eqnarray*}
u_i&=&2+(f_{i-1}-2)+\ldots,+(f_s-2)\\
&=&2+(-2)(s-i+2)+\sum_{j=i-1}^sf_j\\
&=&2(i-s-1)+\sum_{j=i-1}^s f_j.
\end{eqnarray*}

\noindent Hence, the faces of $C_S^i/e_{i-1}$ have sizes $f_1,\ldots, f_{i-2}-1, u_i-1$. Let $G_i$ be the (multi)set of sizes of faces of $C_S^i/e_{i-1}$, i.e. $G_i=\{f_1,\ldots, f_{i-2}-1, u_i-1\}$. Then, we can start from a face of $C_S^i/e_{i-1}$ which borders the contracted edge, and use the fact that this face shares just one edge with the rest of the graph to successively apply (2) and evaluate $P(C_S^i/e_{i-1})$. Then, $P(C_S^{i-1}/e_i)=P(K_2)^{1-i}\prod_{j\in G_i}P(C_j)$, so

\begin{eqnarray*}
P(C_S)&=&tP(C_n)-\sum_{i=1}^s \frac{\prod_{j\in G_i}P(C_j)}{P(K_2)^{i-1}}\\
&=&t((t-1)^n+(-1)^n(t-1))-\sum_{i=1}^s \frac{\prod_{j\in G_i}((t-1)^j+(-1)^j(t-1))}{(t(t-1))^{i-1}}.
\end{eqnarray*}

Now, as was mentioned previously, a generalized vertex join cycle $C_S$ corresponds uniquely to an integer string $\phi(C_S)$. From $\phi(C_S)=a_1,\ldots,a_n$, we can obtain $\phi(C_S^*)$ by simple and fast manipulations. For instance, $\phi(C_S^*)$ can be constructed by starting from an empty string, and for $1\leq i\leq n$, concatenating:

\begin{flalign}
\begin{split}
a_i-1 \text{ zeros}&\text{ if }a_i \geq 2,\\
1 &\text{ if } a_i = 1,\\
z+1 &\text{ if } a_i \text{ is the beginning of a maximal connected }\\
&\qquad \text{substring of zeros of } \phi(C_S) \text{ with length } z.
\end{split}
\end{flalign}


\noindent Moreover, the sizes of the faces of $C_S$, $f_1,\ldots,f_s$, can be obtained by adding 2 to every element of $\phi(C_S^*)$. Thus, these definitions, along with (11), can be used to compute $G_i$ used in the formula above, and thus we can compute $P(C_S)$ and $F(C_S)$ for any generalized vertex join cycle $C_S$.

\section{Conclusion}

We have found a low-order polynomial time algorithm for computing the chromatic polynomials of generalized vertex join trees. This algorithm can also be used to find the flow polynomials of outerplanar graphs. We also derived closed formulas for the chromatic polynomials of generalized vertex join cliques, and the chromatic and flow polynomials of generalized vertex join cycles. Future work will focus on finding the chromatic polynomials of generalized vertex joins of other families of graphs, and of graphs which have several generalized vertex joins.

\section{Acknowledgements}
This material is based upon work supported by the National Science Foundation under Grant No. 1450681.

\end{document}